\renewcommand{\@biblabel}[1]{[#1]\hfill}
\let\NAT@parse\undefined
\newcommand{\newmarkedtheorem}[1]{%
	\newenvironment{#1}
	{\pushQED{\oprocend}\csname inner@#1\endcsname}
	{\popQED\csname endinner@#1\endcsname}%
	\newtheorem{inner@#1}%
}
\newtheorem{theorem}{Theorem}
\newtheorem{lemma}{Lemma}
\newtheorem{corollary}{Corollary}
\newtheorem{assumption}{Assumption}
\newtheorem{definition}{Definition}
\newtheorem{remark}{Remark}
\def\th@plain{%
	\thm@notefont{}
	\normalfont 
}
\def\th@definition{%
	\thm@notefont{}
	\normalfont 
}
\colorlet{istblue}{blue} 
\colorlet{istred}{red!90!black}
\colorlet{istorange}{orange}
\colorlet{istgreen}{green!50!black}
\newcommand{\X}{\mathbb{X}}
\newcommand{\U}{\mathbb{U}}
\newcommand{\Z}{\mathbb{Z}}
\newcommand{\I}{\mathbb{I}}
\newcommand{\R}{\mathbb{R}}
\newcommand{\pushright}[1]{\ifmeasuring@#1\else\omit\hfill$\displaystyle#1$\fi\ignorespaces}
\begin{document}

\title{Economic MPC using a Cyclic Horizon with Application to Networked Control
	Systems} 

\date{}
\author[1]{Stefan Wildhagen} 
\author[2]{Matthias A. M\"uller} 
\author[1]{Frank Allg\"ower}
\affil[1]{Institute for Systems Theory and Automatic Control, University of Stuttgart, 
   Germany. \newline e-mail: \{wildhagen,allgower\}@ ist.uni-stuttgart.de.}
\affil[2]{Institute of Automatic Control, Leibniz University Hannover, Germany. e-mail: mueller@irt.uni-hannover.de}

\maketitle

\begin{abstract}                
\textbf{Abstract.} In this paper, we analyze an economic model predictive control scheme with terminal region and cost, where the system is optimally operated in a certain subset of the state space. The predictive controller operates with a cyclic horizon, which means that starting from an initial length, the horizon is reduced by one at each time step before it is restored to its maximum length again after one cycle. We give performance guarantees for the closed loop, and under a suitable dissipativity condition, establish convergence to the optimal subset. Moreover, we present conditions under which asymptotic stability of the optimal subset can be guaranteed. The results are illustrated in a practical example from the context of Networked Control Systems, which initially motivated the development of the theory presented in this paper.
\end{abstract}


\section{Introduction}

In classical or stabilizing model predictive control (MPC), the objective function is typically designed so as to stabilize a set point or trajectory, which has been pre-chosen under consideration of the system's economic targets. In contrast, economic MPC aims to optimize the economic operating cost of the system directly, by explicitly considering it in its objective function (see, e.g., the survey \cite{Faulwasser18}). This economic cost, however, might be arbitrary such that the optimal regimes of operation for the system can be more complex than to remain in a set point. Optimal modes of operation could in addition comprise periodic orbits or general control invariant subsets of the state space. An important research question is hence to characterize the performance of the closed loop and to determine whether it converges to these general optimal regimes of operation. 

In the context of optimal steady-state operation, a number of different approaches have been considered: terminal equality constraints (\cite{Angeli12}), terminal region and cost (\cite{Amrit11}) and also MPC without terminal conditions (\cite{Gruene13,Gruene14}). The former two schemes achieve stability while for the latter, practical stability is shown. For optimal operation on a periodic orbit, e.g., \cite{Zanon17} and \cite{Mueller16} devise MPC schemes to ensure (practical) stability of the optimal periodic orbit. The case that the optimal mode of operation is a general control invariant subset has also recently received attention in the literature. Stability of the optimal subset is shown in \cite{Martin19} for an MPC with a terminal equality constraint (in the sense that the terminal state lies somewhere in the optimal subset). Conditions under which the optimal subset is stabilized for an MPC with terminal region and cost are provided in \cite{Dong18}. In all of the preceding results, a dissipativity property of the system with respect to the optimal regime of operation is required for stability.

Typically in MPC, a fixed horizon is used, although a variable horizon may be beneficial under certain circumstances. A cyclic horizon, in particular, describes a horizon that shrinks from a maximum length in each sampling time step until it reaches a minimum length, and that is then restored to its maximum value before the cycle is started again. Stabilizing MPC schemes with a cyclic horizon were analyzed in \cite{Koegel13} and \cite{Lazar15}. In both references, an MPC with a terminal region and cost was considered and it was shown that a cyclic horizon can be leveraged to use more flexible terminal regions: Instead of requiring the terminal region to be control invariant, it is sufficient that the state is able to return there after several time steps. However, both references rely on a uniform upper bound on the value function of the MPC optimization problem to establish stability, a condition which might be difficult to verify for general nonlinear systems. Multi-step MPC, which is equivalent to a cyclic horizon in the absence of disturbances, was furthermore considered for schemes without terminal conditions, e.g., in the stabilizing setup in \cite{Gruene09} or in the context of economic MPC with optimal periodic operation in \cite{Mueller16}.

In this note, we consider economic MPC with \textit{cyclic horizons}, where the optimal regime of operation is a \textit{general control invariant subset} of the state space. To ensure recursive feasibility and stability of the closed loop, we use a \textit{terminal region and a terminal cost} in the MPC optimization problem. The combination of economic MPC and a cyclic horizon was motivated by a problem arising in the context of Networked Control Systems (NCS). In the considered setup, as detailed in Section \ref{sect_NCS}, transmission of a control input over a dynamical network can only be guaranteed after a certain number of time steps.

We contribute to the existing theory first by providing a performance and convergence analysis for the considered economic setup with cyclic horizons. Second, we show that a suitable upper bound on the value function at initial time is sufficient to guarantee stability. Furthermore, we develop constructive conditions that guarantee this upper bound, similar to what is already known in the standard case with a constant horizon. Such an analysis is novel also for stabilizing MPC with cyclic horizons, where a uniform upper bound of the value function was formerly required to establish stability (\cite{Koegel13,Lazar15}). Also in this context, the requirement that the invariant set lies in the interior of the terminal region, as typically used in MPC, is attenuated to a more general condition.

The remainder of this paper is organized as follows. In Section \ref{sect_preliminaries}, we introduce some notation and the considered setup. Convergence and performance properties of the economic MPC with cyclic horizon are treated in Section \ref{sect_convergence}, while Section \ref{sect_stability} is devoted to conditions for asymptotic stability. We present the NCS application mentioned above as a special case in Section \ref{sect_NCS} to illustrate our main results.

\section{Preliminaries} \label{sect_preliminaries}

\subsection{Notation}

Let $\I$ and $\mathbb{R}$ denote the set of all integers and real numbers, respectively. We denote $\I_{[a,b]}\coloneqq\I\cap[a,b]$ and $\I_{\ge a}\coloneqq\I\cap[a,\infty)$, $a,b\in\I$, and $\mathbb{R}_{\ge a}\coloneqq[a,\infty)$, $a\in\R$. A function $\alpha:\mathbb{R}_{\ge 0}\rightarrow\mathbb{R}_{\ge 0}$ is said to be of class $\mathcal{K}_\infty$ if it is continuous, zero at zero, strictly increasing and unbounded. We denote by $I$ the identity matrix. For a vector $v\in\mathbb{R}^n$, the set distance to a subset $A\subseteq\mathbb{R}^n$ is defined as $||v||_A \coloneqq \min_{w\in A} ||v-w||$. The Minkowski set addition of two sets $A,B\subset\mathbb{R}^n$ is defined by $A\oplus B \coloneqq \{v\in\mathbb{R}^n|\exists a\in A,b\in B: v=a+b\}$. A ball of radius $a$ around the origin is defined by $\mathcal{B}_a\coloneqq\{v\in\mathbb{R}^n|||v||_2\le a\}$.

\subsection{General Setup}

We consider the nonlinear discrete-time system
\begin{equation}
x(k+1) = f(x(k),u(k)), \label{system_dyn}
\end{equation}
where $x(k) \in \X \subseteq \mathbb{R}^n$ is the system state and $u(k) \in \U \subseteq \mathbb{R}^m$ the controlled input at time $k\in \I_{\ge 0}$. Both the state and input constraint sets $\X$ and $\U$ are assumed to be closed. The state and input are subject to the mixed constraints
\begin{equation*}
(x(k),u(k)) \in \Z \subseteq \X \times \U, \quad k\in \I_{\ge 0},
\end{equation*}
such that $f: \Z \rightarrow \X$, where $f$ is continuous. Note that we do not require $\Z$ to be compact, as is frequently assumed in the economic MPC literature. Associated with the system is an economically motivated, continuous cost $\ell: \Z \rightarrow \mathbb{R}$, which is not assumed to fulfill any definiteness property.

The solution of \eqref{system_dyn} at time $k\in\I_{[0,N]}$, $N\in\I_{\ge 0}$, starting from an initial state $x(0) = x$ and under the input sequence $u(\cdot) = \{u(0),u(1),\ldots,u(N-1)\}\in\U^N$ is denoted by $x_u(x,k)$. Furthermore, $\U^N(x) \coloneqq \{u(\cdot) \in \U^N | (x_u(x,k),u(k))\in\Z, \; \forall k \in \I_{[0,N-1]}\}$ is the set of all admissible input sequences of length $N$ at state $x\in\X$.

As is common in the economic MPC literature, we are interested in the ``long-run'' optimal operation of the system. As in \cite{Dong18}, we define the asymptotic average cost $\ell^*_{av}(x)$ starting from an initial state $x\in\X$.
\begin{definition}
	For a given initial state $x\in \X$, the  best asymptotic average cost is defined as
	\begin{equation*}
	\ell^*_{av}(x) \coloneqq \inf_{u(\cdot) \in \U^K(x)} \liminf_{K\rightarrow\infty}  \frac{\sum_{k=0}^{K-1} \ell(x_u(x,k),u(k))}{K}.
	\end{equation*}
	Moreover, denote by $\ell^*_{av} = \inf_{x\in\X} \ell^*_{av}(x)$ the lowest possible asymptotic average cost.
\end{definition}
Hence, $\ell^*_{av}$ describes the lowest possible average cost along all admissible trajectories of system \eqref{system_dyn}. In the remainder, we assume that the infima are attained. 

The system is controlled by an MPC controller with a cyclically varying horizon $N$ given by
\begin{equation*}
N(k) = \hat{N} - k\text{mod} M,
\end{equation*}
where $M\in\I_{\ge 1}$ denotes the cycle length and $\hat{N}\in\I_{\ge M}$ the maximum horizon length (cf. \cite{Koegel13}). The horizon length is $\hat{N}$ at $k=jM, \; j\in\I_{\ge 0}$ and then shrinks by one in each time step until the minimum horizon length $\hat{N}-M+1$ is reached at $k=jM-1, \; j\in\I_{\ge 1}$. Then, the full horizon $\hat{N}$ is restored at the next time instance and the cycle starts again, as illustrated in Figure \ref{cyclic_horizon}.

\begin{figure}[h]
	\centering
	\begin{tikzpicture}[>=latex]
	\begin{axis}[
			width=\columnwidth,
			height=4cm,
			xmin=0, 
			xmax=8,
			ymin=-1, 
			ymax=5,
			disabledatascaling,
			ylabel=Time $k$,
			xlabel=Prediction horizon $N(k)$,
			xtick={-1,0,1,2,3,4,5,6,7,8},
			xticklabels={,$0$,$1$,$2$,$3$,$4$,$5$,$6$,$7$,$8$},
			ytick={0,1,2,3,4,5},
			yticklabels={$0$,$1$,$2$,$3$,$4$,$5$},
			ylabel style={yshift=-0.3cm},
			xmajorgrids=true,
			axis x line=bottom,
			axis y line=left]
		\addplot[very thick,color=istgreen,mark=*] coordinates {(0,0) (5,0)};
		\addplot[very thick,color=istorange,mark=*] coordinates {(1,1) (5,1)}; 
		\addplot[very thick,color=istblue,mark=*] coordinates {(2,2) (5,2)};
		\addplot[very thick,color=istgreen,mark=*] coordinates {(3,3) (8,3)}; 
		\addplot[very thick,color=istorange,mark=*] coordinates {(4,4) (8,4)}; 
		\addplot[very thick,color=istblue,mark=*] coordinates {(5,5) (8,5)};  

	\end{axis} 
\end{tikzpicture}
	\caption{Cyclic horizon with $\hat{N}=5$ and $M=3$.}
	\label{cyclic_horizon}
\end{figure}
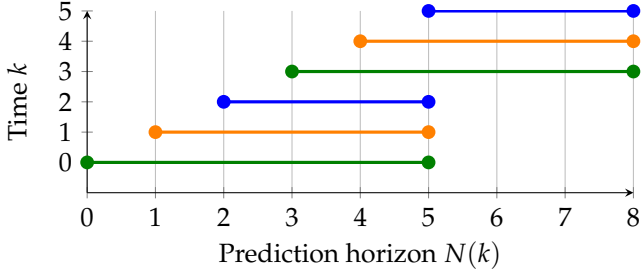

In the economic MPC scheme, the objective function is defined as
\begin{equation}
V(x,u(\cdot),N) \coloneqq \sum_{i=0}^{N-1} \ell(x_u(x,i),u(i)) + V_f(x_u(x,N)) \label{objective_fun}
\end{equation}
with the continuous terminal cost function $V_f: \X_f\rightarrow \mathbb{R}$. Since $f$, $\ell$ and $V_f$ are continuous, $V$ is as well. The set $\X_f\subseteq\X$ denotes a closed terminal region, in which the predicted terminal state is required to be contained, i.e., $x_u(x,N)\in\X_f$. Hence, the MPC optimization problem $\mathbb{P}(x,k)$ solved at state $x$ and time $k$ reads
\begin{equation}
V^*(x,k) \coloneqq \hspace{-7pt} \min_{u(\cdot)\in\U^{N(k)}(x)} \hspace{-3pt} \{V(x,u(\cdot),N(k))| x_u(x,N(k))\in\X_f\}.
\label{value_fun}
\end{equation}
We denote the minimizer of \eqref{value_fun} by
\begin{equation*}
u^*(\cdot;x,k) = \{u^*(0;x,k),\ldots,u^*(N(k)-1;x,k)\}
\end{equation*}
and the feasible set, i.e., the set of all initial states $x$ such that $\mathbb{P}(x,k)$ is feasible with horizon $N(k)$, by $\mathcal{X}_{N(k)}$. Subsequently, the first part of the optimal control sequence is applied to the system according to $u(k)=u^*(0;x,k)\eqqcolon\nu_k(x)$, then the resulting state at $k+1$ is measured and the optimal control problem is solved anew. We further denote $\nu\coloneqq\{\nu_k\}_{k\in\I_{\ge 0}}$.

\begin{remark}
	Since no disturbances act on the system, equivalent to the procedure above is to solve the optimization problem with full horizon $\hat{N}$ every $M$ time steps and to apply the first $M$ pieces of the optimal input trajectory. This follows immediately from Bellman's principle of optimality. In the presence of disturbances, however, such a scheme might reduce performance due to the longer open-loop phase (cf. \cite{Gruene15}).
\end{remark}

\begin{definition}
	System \eqref{system_dyn} is strictly dissipative with respect to a set $X$ and the supply rate $s: \Z\rightarrow\mathbb{R}$ if there exists a storage function $\lambda: \X\rightarrow\mathbb{R}_{\ge 0}$ and a $\mathcal{K}_\infty$-function $\rho$ such that for all $(x,u)\in\Z$
	\begin{equation*}
	\lambda(f(x,u))-\lambda(x) \le s(x,u) - \rho(||x||_X).
	\end{equation*}
\end{definition}

\begin{definition}
	A nonempty set $X\subseteq\X$ is called control invariant if for all $\bar{x}\in X$, there exists a $\bar{u}\in\U$ such that $f(\bar{x},\bar{u})\in X$.
\end{definition}

\begin{assumption}
	System \eqref{system_dyn} is strictly dissipative with respect to the control invariant set $\bar{\X}$ and the supply rate $s(x,u)=\ell(x,u)-\ell^*_{av}$ with a continuous storage function.
	\label{ass_dissi}
\end{assumption}

\begin{remark}
	The notion of a control invariant set $\bar{\X}$ also comprises steady states and periodic orbits.
\end{remark}

\section{Convergence and Performance} \label{sect_convergence}

A well-studied way to ensure recursive feasibility and convergence in economic MPC is to assume that the terminal region is a control invariant set and that the forward difference of the terminal cost in $\X_f$ is bounded by $-\ell+\ell^*_{av}$ under a local control (\cite{Amrit11,Dong18}). In this work, we use slightly relaxed conditions similar to those in \cite{Koegel13,Lazar15}.

In the following, we consider multiple terminal controllers $\kappa_0(x),\ldots,\kappa_{M-1}(x)$, which make the terminal region ``$M$-step invariant''. Given an initial state $x\in\X_f$, the controllers $\kappa_0$ to $\kappa_{M-1}$ are applied successively. After $M$ time steps, $\kappa_0$ is used again and the cycle is restarted. For an initial state $x\in \X_f$, we denote the solution of \eqref{system_dyn} $k$ time steps after the initial time, resulting from an application of this terminal control law, by $x_{\kappa}(x,k)$.

\begin{assumption}
	There exist a terminal region $\X_f$, a cycle length $M$ and terminal control laws $\{\kappa_k\}_{k\in\I_{[0,M-1]}}$ such that for all $x\in\X_f$, it holds that $x_\kappa(x,M)\in\X_f$ and
	\begin{equation*}
	(x_\kappa(x,k),\kappa_k(x_\kappa(x,k))) \in \Z, \quad \forall k\in\I_{[0,M-1]}.
	\end{equation*}
	\label{ass_term_inv}
\end{assumption}

This assumption requires that for an initial state in the terminal region $\X_f$, there must exist a feasible control law that drives the state back into the terminal region after one cycle period $M$. In the meantime, the state is merely not allowed to leave the constraint set. Note that while $\bar{\X}$ is a control invariant set, Assumption \ref{ass_term_inv} only requires the terminal region $\X_f$ to be ``$M$-step control invariant''.
\begin{assumption}
	There exists a terminal cost $V_f$ such that for all $x\in \X_f$, with $\X_f$, $M$ and $\{\kappa_k\}_{k\in\I_{[0,M-1]}}$ from Assumption \ref{ass_term_inv},
	\begin{align}
	V_f(x_\kappa(x,M))-V_f(x) \le -\sum_{k=0}^{M-1}&\ell(x_\kappa(x,k),\kappa_k(x_\kappa(x,k))) \nonumber \\ 
	+ &M \ell^*_{av}. \label{term_decrease}
	\end{align}
	\label{ass_term_decrease}
\end{assumption}

To analyze convergence of the proposed MPC scheme, as is common in economic MPC, we introduce the rotated cost
\begin{equation}
L(x,u) \coloneqq \ell(x,u) + \lambda(x) - \lambda(f(x,u)) - \ell^*_{av} \label{rotated_cost}
\end{equation}
and the rotated terminal cost
\begin{equation}
\bar{V}_f(x) \coloneqq V_f(x) + \lambda(x). \label{rotated_terminal_cost}
\end{equation}
Then, we define the rotated objective function as
\begin{equation*}
\bar{V}(x,u(\cdot),N) \coloneqq \sum_{i=0}^{N-1} L(x_u(x,i),u(i)) + \bar{V}_f(x_u(x,N))
\label{cost_fun_rot}
\end{equation*}
and the rotated optimal control problem $\bar{\mathbb{P}}(x,k)$ as
\begin{equation*}
\bar{V}^*(x,k) \coloneqq \hspace{-7pt} \min_{u(\cdot)\in\U^{N(k)}(x)} \hspace{-3pt} \{\bar{V}(x,u(\cdot),N(k))| x_u(x,N(k))\in\X_f\}.
\label{value_fun_rot}
\end{equation*}
Note that $f$, $\ell$, $\lambda$ and $V_f$ are continuous, and therefore, $L$, $\bar{V}_f$ and $\bar{V}$ are as well.

\begin{corollary}
	If Assumption \ref{ass_dissi} holds, $L(x,u) \ge \rho(||x||_{\bar{\X}})$.
	\label{cor_rot_cost}
\end{corollary}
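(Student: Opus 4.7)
The proof is essentially a one-line rearrangement of the dissipativity inequality, so my plan is to carry it out directly rather than build up any machinery.

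First, I would invoke Assumption \ref{ass_dissi} with the given supply rate $s(x,u) = \ell(x,u) - \ell^*_{av}$. This yields, for every $(x,u)\in\Z$,
\begin{equation*}
\lambda(f(x,u)) - \lambda(x) \le \ell(x,u) - \ell^*_{av} - \rho(\|x\|_{\bar{\X}}).
\end{equation*}

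Next, I would simply move all terms except $\rho(\|x\|_{\bar{\X}})$ to the left-hand side, obtaining
\begin{equation*}
\ell(x,u) + \lambda(x) - \lambda(f(x,u)) - \ell^*_{av} \ge \rho(\|x\|_{\bar{\X}}).
\end{equation*}
By the definition \eqref{rotated_cost} of the rotated stage cost $L$, the left-hand side is exactly $L(x,u)$, which gives the claim.

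There is no real obstacle here: the corollary is a direct consequence of unwinding the definition of $L$ and applying strict dissipativity, and no additional hypotheses beyond Assumption \ref{ass_dissi} are needed. The only point worth noting in passing is that the inequality holds on all of $\Z$, which matches the domain on which $L$ is defined.
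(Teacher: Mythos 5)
Your proof is correct and is exactly the argument the paper intends: the corollary is stated without proof precisely because it follows from rearranging the strict dissipativity inequality of Assumption \ref{ass_dissi} with supply rate $s(x,u)=\ell(x,u)-\ell^*_{av}$ and recognizing the left-hand side as $L(x,u)$ via \eqref{rotated_cost}. Nothing is missing.
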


With the conditions on the terminal region and cost, we can state the following preliminary result.
\begin{lemma}
	If Assumptions \ref{ass_dissi}, \ref{ass_term_inv} and \ref{ass_term_decrease} hold, then
	\begin{itemize}
		\item the rotated optimal control problem $\bar{\mathbb{P}}(x_\nu(x(0),k),k)$ is feasible for all $k\in\I_{\ge 0}$ if $\bar{\mathbb{P}}(x(0),0)$ is feasible,
		\item for the rotated terminal cost $\bar{V}_f$ it holds that for all $x\in\X_f$,
		\begin{equation}
		\bar{V}_f(x_\kappa(x,M))-\bar{V}_f(x) \hspace{-2pt} \le \hspace{-2pt}- \hspace{-5pt} \sum_{k=0}^{M-1} \hspace{-2pt} L(x_\kappa(x,k),\hspace{-1pt}\kappa_k(x_\kappa(x,k))).
		\label{term_decrease_rot}
		\end{equation}
	\end{itemize}
	\label{lemma_term_decrease_rot}
\end{lemma}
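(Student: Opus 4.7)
The plan is to handle the two bullet points separately, using a standard shift-and-extend construction for the first and a direct telescoping computation for the second.

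For recursive feasibility, I would proceed by induction on $k$, distinguishing two cases based on the position within the cycle. Let $u^*(\cdot;x_\nu(x(0),k),k)$ denote the rotated optimizer at time $k$ (which exists and has the same feasibility properties as the optimizer of $\mathbb{P}$, since the admissible set is determined by $\U^{N(k)}(x)$ and the terminal constraint, both of which are shared). In the case $k \bmod M \neq M-1$, we have $N(k+1) = N(k) - 1$, so I would take as candidate at time $k+1$ simply the tail $\{u^*(1;\cdot,k),\ldots,u^*(N(k)-1;\cdot,k)\}$; this sequence has length $N(k+1)$, respects the mixed constraints by construction, and its terminal state coincides with the terminal state of the time-$k$ optimizer, hence lies in $\X_f$. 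In the case $k \bmod M = M-1$, the horizon jumps from $\hat N - M + 1$ to $\hat N$, so the tail has length $\hat N - M$ and ends in some $\bar x \in \X_f$; I would then append $M$ additional inputs generated by the terminal controllers $\kappa_0(\bar x), \kappa_1(x_\kappa(\bar x,1)), \ldots, \kappa_{M-1}(x_\kappa(\bar x,M-1))$. Assumption~\ref{ass_term_inv} guarantees that the mixed constraints are met along this extension and that $x_\kappa(\bar x, M) \in \X_f$, so the resulting length-$\hat N$ sequence is admissible and terminates in $\X_f$, as required.

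For the decrease inequality \eqref{term_decrease_rot}, I would argue by direct computation. Starting from the definition \eqref{rotated_terminal_cost}, write
\begin{equation*}
\bar V_f(x_\kappa(x,M)) - \bar V_f(x) = \bigl[V_f(x_\kappa(x,M)) - V_f(x)\bigr] + \bigl[\lambda(x_\kappa(x,M)) - \lambda(x)\bigr].
\end{equation*}
The first bracket is bounded above by $-\sum_{k=0}^{M-1}\ell(x_\kappa(x,k),\kappa_k(x_\kappa(x,k))) + M\ell^*_{av}$ via Assumption~\ref{ass_term_decrease}. The second bracket telescopes: since $x_\kappa(x,k+1)=f(x_\kappa(x,k),\kappa_k(x_\kappa(x,k)))$,
\begin{equation*}
\lambda(x_\kappa(x,M)) - \lambda(x) = \sum_{k=0}^{M-1}\bigl[\lambda(f(x_\kappa(x,k),\kappa_k(x_\kappa(x,k)))) - \lambda(x_\kappa(x,k))\bigr].
\end{equation*}
Substituting both into the right-hand side and rearranging using \eqref{rotated_cost} yields exactly $-\sum_{k=0}^{M-1} L(x_\kappa(x,k),\kappa_k(x_\kappa(x,k)))$.

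I do not expect any real obstacle: the feasibility argument is a routine shift-and-append leveraging Assumption~\ref{ass_term_inv}, and the decrease inequality is essentially an algebraic identity once the telescoping sum of $\lambda$ is combined with Assumption~\ref{ass_term_decrease}. The only point requiring a little care is making sure that Assumption~\ref{ass_term_inv} is invoked at the correct step (namely at cycle boundaries) so that the appended segment both satisfies the mixed constraints on the $M$ intermediate steps and places the terminal state back in $\X_f$.
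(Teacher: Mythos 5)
Your proposal is correct and follows essentially the same route as the paper: the decrease inequality is obtained by exactly the paper's computation (adding the telescoped $\lambda$-difference to Assumption~\ref{ass_term_decrease} and identifying the summands with $-L$ via \eqref{rotated_cost}), and your shift-and-append feasibility argument is precisely the standard construction the paper delegates to \cite[Proposition 4]{Koegel13}. The only difference is that you spell out the candidate-sequence argument explicitly rather than citing it, which is fine.
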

\begin{proof}
	Recursive feasibility can be proven exactly as in \cite[Proposition 4]{Koegel13} since Assumption \ref{ass_term_inv} is equivalent to \cite[Assumption 3]{Koegel13}.
	
	For the second part, adding $\lambda(x_\kappa(x,M))-\lambda(x)$ to both sides of \eqref{term_decrease} gives
	\begin{align*}
	&\underbrace{V_f(x_\kappa(x,M)) + \lambda(x_\kappa(x,M))}_{\stackrel{\eqref{rotated_terminal_cost}}{=}\bar{V}_f(x_\kappa(x,M)))} \underbrace{- V_f(x) - \lambda(x)}_{\stackrel{\eqref{rotated_terminal_cost}}{=}-\bar{V}_f(x)} \\
	&\le \underbrace{\lambda(x_\kappa(x,M)) - \lambda(x)}_{\sum_{k=0}^{M-1} \lambda(x_\kappa(x,k+1)) - \lambda(x_\kappa(x,k))} \\ &\hspace{30pt}-\sum_{k=0}^{M-1}\Big(\ell(x_\kappa(x,k),\kappa_k(x_\kappa(x,k))) + \ell^*_{av}\Big) \\
	&\stackrel{\eqref{rotated_cost}}{=} -\sum_{k=0}^{M-1}L(x_\kappa(x,k),\kappa_k(x_\kappa(x,k))).
	\end{align*}
\end{proof}

\begin{assumption}
	The minimum of the rotated terminal cost is $0$. It is attained exactly on $\bar{\X}$, i.e., $\bar{\X}=\arg\min_{x}\bar{V}_f(x)$.
	\label{ass_lb_term_cost_rot}
\end{assumption}
\begin{remark}
	Assumption \ref{ass_lb_term_cost_rot} can be fulfilled for instance if both $V_f$ and $\lambda$ take their minimal values on $\bar{\X}$. It also implies that $\bar{\X}\subseteq\X_f$, since $V_f$ is defined on $\X_f$. Requiring the minimum to be equal to $0$ is without loss of generality.
\end{remark}
With Lemma \ref{lemma_term_decrease_rot} and Assumption \ref{ass_lb_term_cost_rot}, we state the first main result.
\begin{theorem}
	Suppose $x(0)\in\mathcal{X}_{\hat{N}}$. If Assumptions \ref{ass_dissi}-\ref{ass_lb_term_cost_rot} hold, then the optimization problem $\mathbb{P}(x_\nu(x(0),k),k)$ is feasible for all $k\in\I_{\ge 0}$ and $x_\nu(x(0),k)$ converges to $\bar{\X}$ as $k\rightarrow\infty$.
	\label{thm_convergence}
\end{theorem}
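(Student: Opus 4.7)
The plan is to work with the rotated problem throughout and to obtain convergence via a telescoping argument on the closed-loop values $\bar{V}^*$ at the cycle starts $k=jM$.

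Feasibility is immediate: since $\bar{V}(x,u(\cdot),N)$ and $V(x,u(\cdot),N)$ differ only by the summand $\lambda(x)-N\ell^*_{av}$, which is constant in $u(\cdot)$, the two problems $\mathbb{P}(x,k)$ and $\bar{\mathbb{P}}(x,k)$ have identical feasible sets and minimizers. Hence feasibility of $\mathbb{P}(x_\nu(x(0),k),k)$ for all $k\in\I_{\ge 0}$ follows directly from the first bullet of Lemma \ref{lemma_term_decrease_rot}.

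For convergence, fix $j\in\I_{\ge 0}$ and let $\{u^*_0,\ldots,u^*_{\hat{N}-1}\}$ denote the minimizer of $\bar{\mathbb{P}}(x_\nu(x(0),jM),jM)$, with predicted trajectory $x^*_0,\ldots,x^*_{\hat{N}}$ and $x^*_{\hat{N}}\in\X_f$. Bellman's principle (as noted in the Remark after \eqref{value_fun}, valid in the absence of disturbances) guarantees that over one cycle the closed loop agrees with the open-loop prediction: $x_\nu(x(0),jM+i)=x^*_i$ and $\nu_{jM+i}(x^*_i)=u^*_i$ for $i\in\I_{[0,M-1]}$, and in particular $x_\nu(x(0),(j+1)M)=x^*_M$. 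I would then build a candidate input sequence of length $\hat{N}$ at $x^*_M$ and time $(j+1)M$ by concatenating the tail $u^*_M,\ldots,u^*_{\hat{N}-1}$ with the terminal controllers $\kappa_0,\ldots,\kappa_{M-1}$ applied starting from $x^*_{\hat{N}}$. Assumption \ref{ass_term_inv} ensures admissibility and that the candidate's terminal state lies in $\X_f$, and evaluating its rotated cost while invoking the rotated terminal decrease \eqref{term_decrease_rot} of Lemma \ref{lemma_term_decrease_rot} yields, after using optimality at time $(j+1)M$,
\begin{equation*}
\bar{V}^*(x_\nu(x(0),(j+1)M),(j+1)M) \le \bar{V}^*(x_\nu(x(0),jM),jM) - \sum_{i=0}^{M-1} L(x^*_i,u^*_i).
\end{equation*}

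Because $L\ge\rho(\|\cdot\|_{\bar{\X}})\ge 0$ by Corollary \ref{cor_rot_cost} and $\bar{V}_f\ge 0$ by Assumption \ref{ass_lb_term_cost_rot}, one has $\bar{V}^*\ge 0$, so summing the inequality above over $j\in\I_{\ge 0}$ telescopes on the left and collects the rotated stage costs along the entire closed-loop trajectory on the right, giving
\begin{equation*}
\sum_{k=0}^{\infty}\rho(\|x_\nu(x(0),k)\|_{\bar{\X}}) \le \bar{V}^*(x(0),0) < \infty.
\end{equation*}
Convergence of this series forces $\rho(\|x_\nu(x(0),k)\|_{\bar{\X}})\to 0$, and since $\rho\in\mathcal{K}_\infty$ this implies $\|x_\nu(x(0),k)\|_{\bar{\X}}\to 0$ as $k\to\infty$. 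I expect the main obstacle to be the admissibility and cost bookkeeping for the candidate input — in particular, that the terminal controllers $\kappa_k$ are applied starting from the predicted $x^*_{\hat{N}}$ rather than from the current closed-loop state, and that identifying the telescoped inner sums with the closed-loop rotated stage costs relies on Bellman's principle. Everything else follows routinely from Lemma \ref{lemma_term_decrease_rot} and Corollary \ref{cor_rot_cost}.
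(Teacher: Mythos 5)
Your proof is correct and follows essentially the same route as the paper: the identical feasibility argument via the constant offset between $V$ and $\bar{V}$, and the same key candidate input (the optimal tail concatenated with $\kappa_0,\ldots,\kappa_{M-1}$ applied from the predicted terminal state) combined with \eqref{term_decrease_rot} and Corollary \ref{cor_rot_cost}. The only difference is bookkeeping: you aggregate the decrease over a full cycle and conclude via summability of $\sum_k \rho(\|x_\nu(x(0),k)\|_{\bar{\X}})$, whereas the paper establishes the one-step decrease \eqref{cost_decrease_rot_2} for every $k$ and concludes from monotone convergence of the lower-bounded, non-increasing sequence $\bar{V}^*$ --- the two arguments are equivalent.
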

\begin{proof}
	In a first step, we assert that the solution sets of $\mathbb{P}(x,k)$ and that of $\bar{\mathbb{P}}(x,k)$ are identical, which means that they yield the same optimizer $u^*(\cdot;x,k)$. To this end, we notice that both optimization problems are subject to the same constraints and establish using similar techniques as in \cite{Amrit11} that for the objective functions it holds that
	\begin{equation}
	\bar{V}(x,u(\cdot),N) = V(x,u(\cdot),N) - N\ell^*_{av} + \lambda(x).
	\label{cost_fcns_diff}
	\end{equation}
	Since the last two terms in \eqref{cost_fcns_diff} depend entirely on constant parameters of the optimization problem, $\bar{V}(x,u(\cdot),N)$ and $V(x,u(\cdot),N)$ differ by a constant and the claim is proven.
	
	The problem $\mathbb{P}(x(0),0)$ is feasible by assumption since $N(0)=\hat{N}$. With the identical solution sets, the rotated problem $\bar{\mathbb{P}}(x(0),0)$ is feasible as well. By Lemma \ref{lemma_term_decrease_rot}, $\bar{\mathbb{P}}(x_\nu(x(0),k),k)$ is then feasible for all $k\in\I_{\ge 0}$, and the same follows immediately for the original problem $\mathbb{P}(x_\nu(x(0),k),k)$.
	
	Due to the identical solution sets, we are also able to use the rotated problem in the analysis of convergence. Notice that due to optimality of the value function,
	\begin{equation}
	\bar{V}^*(x,k) \le \bar{V}(x,\tilde{u}(\cdot;x,k),N(k))
	\label{value_ub_by_cost}
	\end{equation}
	for some feasible control input $\tilde{u}(\cdot;x,k)$. If $k \neq jM-1$, $j\in \I_{\ge 1}$, due to $N(k) = N(k+1)+1$ and  Bellman's principle of optimality, the optimal input at $k+1$ is
	\begin{align*}
	u^*(\cdot;x_{\nu_k}(x,1),k+1) = \{u^*&(1;x,k),\ldots, \\
	&u^*(N(k+1);x,k)\}.
	\end{align*}
	Thus, from \eqref{objective_fun} we obtain for $k \neq jM-1$
	\begin{equation}
	\bar{V}^*(x_{\nu_k}(x,1),k+1) = \bar{V}^*(x,k) - L(x,u^*(0;x,k)).
	\label{cost_decrease_rot}
	\end{equation}
	For $k = jM-1$, $j\in \I_{\ge 1}$, we have $N(k+1)-N(k)=M$. Denoting $x^{*}\coloneqq x_{u^*(\cdot;x,k)}(x,N(k))$, we choose the feasible input
	\begin{align*}
	\tilde{u}&(\cdot;x_{\nu_k}(x,1),k+1) = \{u^*(1;x,k),\ldots, \\
	&u^*(N(k)-1;x,k), \nonumber \kappa_0(x^{*}),\ldots,\kappa_M(x_{\kappa}(x^{*},M-1))\}.
	\end{align*}
	With this, it holds that
	\begin{align*}
	&\bar{V}(x_{\nu_k}(x,1),\tilde{u}(\cdot;x_{\nu_k}(x,1),k+1),N(k+1)) \\
	&\hspace{-1pt} =  \bar{V}^*(x,k) - L(x,u^*(0;x,k)) + \bar{V}_f(x_\kappa(x^*,M)) \\
	& - \hspace{-1pt} \bar{V}_f(x^*) + \sum_{i=0}^{M-1} L(x_\kappa(x^{*},i),\kappa_i(x_\kappa(x^{*},i)))
	\end{align*}
	and then with \eqref{term_decrease_rot} and \eqref{value_ub_by_cost}
	\begin{equation}
	\bar{V}^*(x_{\nu_k}(x,1),k+1) \hspace{-1pt} \le \hspace{-1pt} \bar{V}^*(x,k) - L(x,u^*(0;x,k)).
	\label{cost_decrease_rot_2}
	\end{equation}
	In view of \eqref{cost_decrease_rot}, we conclude that \eqref{cost_decrease_rot_2} holds for all $k\in\I_{\ge 0}$.
	
	Due to \eqref{cost_decrease_rot_2}, Assumption \ref{ass_dissi} and Corollary \ref{cor_rot_cost}, $\bar{V}^*$ decreases unless $x\in\bar{\X}$. Since $\bar{V}^*$ is lower bounded due to Assumption \ref{ass_lb_term_cost_rot}, it must converge to a constant value and hence, $L\rightarrow 0$ as $k\rightarrow\infty$. From the lower bound on $L$ by Corollary \ref{cor_rot_cost}, it must also hold that $||x_\nu(x(0),k)||_{\bar{\X}}\rightarrow 0$ as $k\rightarrow\infty$.
\end{proof}

In the second main result, we dwell on the performance of system \eqref{system_dyn} in closed loop with the MPC.
\begin{theorem}
	Suppose $x(0)\in\mathcal{X}_{\hat{N}}$. If $\ell$ is bounded on $\Z$ and Assumptions \ref{ass_term_inv} and \ref{ass_term_decrease} hold, then the asymptotic average cost of system \eqref{system_dyn} controlled by the MPC with cyclic horizon, is less than or equal to the lowest possible asymptotic average cost, i.e.,
	\begin{equation*}
	\limsup_{K\rightarrow\infty} \frac{\sum_{k=0}^{K-1} \ell(x_{\nu}(x(0),k),\nu_k(x_{\nu}(x(0),k)))}{K} \le \ell^*_{av}.
	\end{equation*}
	\label{thm_performance}
\end{theorem}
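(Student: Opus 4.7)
The plan is to run a telescoping argument on the value function $V^*$ along the closed loop, entirely analogous to the decrease estimate derived in the proof of Theorem \ref{thm_convergence} but carried out on the original (un-rotated) cost. The point of using $V^*$ rather than $\bar V^*$ is that Theorem \ref{thm_performance} drops Assumption \ref{ass_dissi}, so the rotated machinery is unavailable.

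First I would establish a per-step inequality of the form
\[
V^*(x(k+1),k+1) \le V^*(x(k),k) - \ell(x(k),\nu_k(x(k))) + \alpha_k,
\]
with $\alpha_k = M\ell^*_{av}$ when $k = jM-1$, $j\in\I_{\ge 1}$, and $\alpha_k = 0$ otherwise. For $k \ne jM-1$ the tail of the optimizer at $(x(k),k)$ is itself admissible at $(x(k+1),k+1)$ since $N(k+1)=N(k)-1$, and the inequality reduces to the Bellman identity used in \eqref{cost_decrease_rot}. For $k = jM-1$ the horizon jumps back up by $M$, so I would construct the candidate input at $x(k+1)$ by concatenating the tail of $u^*(\cdot;x(k),k)$ with the $M$ terminal controllers $\kappa_0,\ldots,\kappa_{M-1}$, exactly as in the proof of Theorem \ref{thm_convergence}. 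The cost of this candidate, combined with Assumption \ref{ass_term_decrease}, collapses the $M$ additional stage costs against the terminal cost difference and leaves the residual $+M\ell^*_{av}$.

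Second, I would telescope this inequality over $k=0,\dots,K-1$. Letting $J_K$ denote the number of indices $k\in\I_{[0,K-1]}$ of the form $jM-1$, which satisfies $J_K M \le K + M$, I obtain
\[
\sum_{k=0}^{K-1}\ell(x(k),\nu_k(x(k))) \le V^*(x(0),0) - V^*(x(K),K) + J_K M\,\ell^*_{av}.
\]
Dividing by $K$ and sending $K\to\infty$, the term $V^*(x(0),0)/K$ vanishes and $J_K M/K \to 1$, so it remains to show $\liminf_{K\to\infty} V^*(x(K),K)/K \ge 0$. Here boundedness of $\ell$ on $\Z$ enters: since $|\ell|$ is uniformly bounded and $N(k)\le\hat N$, the stage-cost contribution to $V^*(x(K),K)$ is uniformly bounded, and a uniform lower bound on $V_f$ on $\X_f$ (which in the present setting is the natural analogue of the typical MPC boundedness of the terminal cost) then yields a uniform constant lower bound on $V^*$, from which $V^*(x(K),K)/K \to 0$.

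The main obstacle is precisely this last step: the telescoping gives only a one-sided estimate on $V^*$, so one cannot avoid an independent lower bound on $V^*(x(K),K)$. Everything else is bookkeeping — counting cycle-end indices correctly, verifying feasibility of the extension at $k = jM-1$, and applying Assumption \ref{ass_term_decrease} — all of which mirror the corresponding manipulations in the proof of Theorem \ref{thm_convergence} and introduce no new difficulty.
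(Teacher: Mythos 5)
Your proposal follows essentially the same route as the paper's proof: a one-step decrease inequality for the un-rotated value function, obtained at cycle-end times from the candidate input that appends the terminal controllers together with Assumption \ref{ass_term_decrease}, followed by the telescoping/averaging argument of \cite{Angeli12} using boundedness of $\ell$. Your bookkeeping is in fact slightly sharper than the paper's displayed per-step bound (the surplus $M\ell^*_{av}$ enters only at $k=jM-1$ rather than $\ell^*_{av}$ at every step, which is what makes the estimate valid regardless of the sign of $\ell^*_{av}$), and the lower bound on $V_f$ over $\X_f$ that you flag as the one ingredient not listed among the hypotheses is likewise implicitly required by the paper's appeal to the analysis in \cite{Angeli12}.
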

\begin{proof}
	Using the same methodology as in the proof of Theorem \ref{thm_convergence}, we obtain with Assumption \ref{ass_term_decrease}
	\begin{equation*}
	V^*(x_{\nu_k}(x,1),k+1) \hspace{-1pt} \le \hspace{-1pt} V^*(x,k) - \ell(x,u^*(0;x,k)) + \ell^*_{av}.
	\label{cost_decrease}
	\end{equation*}
	Since $\ell$ is bounded on $\Z$, $V^*(x_{\nu_k}(x,1),k+1) - V^*(x,k)$ is also bounded on $\mathcal{X}_{N(k)}$. Using the same analysis as in the proof of \cite[Theorem 1]{Angeli12} yields the claim.
\end{proof}

\section{Asymptotic Stability} \label{sect_stability}

In this section, we present constructive conditions under which the economic MPC with cyclic horizon achieves stability of the control invariant set $\bar{\X}$, in addition to convergence. To this end, we first state a sufficient condition for asymptotic stability of $\bar{\X}$.

\begin{corollary}
	Under the conditions of Theorem \ref{thm_convergence}, the control invariant set $\bar{\X}$ is asymptotically stable with a region of attraction $\mathcal{X}_{\hat{N}}$ if there exists a $\mathcal{K}_\infty$-function $\sigma$ such that
	\begin{equation}
	\bar{V}^*(x(0),0)\le\sigma(||x(0)||_{\bar{\X}}), \quad\forall x(0) \in \mathcal{X}_{\hat{N}}.
	\label{ub_rot_value_fcn}
	\end{equation}
	\label{cor_stability}
\end{corollary}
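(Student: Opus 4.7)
The plan is to treat the rotated optimal value $\bar{V}^*(\cdot,\cdot)$ as a (time-varying) Lyapunov function for the closed loop around $\bar{\X}$. Convergence of $x_\nu(x(0),k)$ to $\bar{\X}$ is already supplied by Theorem \ref{thm_convergence}, so what remains is Lyapunov stability in the $\varepsilon$-$\delta$ sense. My strategy is to sandwich $\bar{V}^*$ between two $\mathcal{K}_\infty$ bounds in the distance to $\bar{\X}$: the upper bound at time $0$ is furnished by hypothesis \eqref{ub_rot_value_fcn}, while the lower bound I will extract from the structure of the rotated cost.

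First I would establish the lower bound $\bar{V}^*(x,k) \ge \rho(||x||_{\bar{\X}})$ for every feasible $x$ and every $k \in \I_{\ge 0}$. By Assumption \ref{ass_lb_term_cost_rot}, $\bar{V}_f \ge 0$ on $\X_f$, and by Corollary \ref{cor_rot_cost}, every stage cost $L(\cdot,\cdot)$ is nonnegative. All summands in the rotated objective are therefore nonnegative, and in particular the first-stage contribution $L(x,u^*(0;x,k)) \ge \rho(||x||_{\bar{\X}})$ already yields the claimed lower bound.

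Next I would invoke inequality \eqref{cost_decrease_rot_2} derived in the proof of Theorem \ref{thm_convergence}; combined with $L \ge 0$ it gives the monotone non-increase $\bar{V}^*(x_\nu(x(0),k),k) \le \bar{V}^*(x(0),0)$ along the closed-loop trajectory. Together with the hypothesis \eqref{ub_rot_value_fcn} applied at $k=0$ and the lower bound just derived, this yields
\begin{equation*}
\rho(||x_\nu(x(0),k)||_{\bar{\X}}) \le \bar{V}^*(x_\nu(x(0),k),k) \le \bar{V}^*(x(0),0) \le \sigma(||x(0)||_{\bar{\X}})
\end{equation*}
for all $x(0) \in \mathcal{X}_{\hat{N}}$ and all $k \in \I_{\ge 0}$. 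Choosing $\delta = \sigma^{-1}(\rho(\varepsilon))$ for a prescribed $\varepsilon > 0$ then delivers Lyapunov stability, and together with the attractivity from Theorem \ref{thm_convergence} this gives asymptotic stability with region of attraction $\mathcal{X}_{\hat{N}}$.

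The only subtlety — and I do not expect a real obstacle — is that \eqref{ub_rot_value_fcn} is hypothesized only at $k=0$, whereas stability arguments usually rely on an upper bound uniform in $k$. This gap is closed exactly by the monotone non-increase of $\bar{V}^*$ along the closed loop, so no uniform-in-$k$ upper bound on the rotated value function is needed; this is precisely the relaxation highlighted in the introduction relative to \cite{Koegel13,Lazar15}.
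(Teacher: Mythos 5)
Your proposal is correct and follows essentially the same route as the paper's proof: the lower bound $\bar{V}^*(x,k)\ge L(x,u^*(0;x,k))\ge\rho(||x||_{\bar{\X}})$ from Assumption \ref{ass_lb_term_cost_rot} and Corollary \ref{cor_rot_cost}, the monotone non-increase of $\bar{V}^*$ along the closed loop via \eqref{cost_decrease_rot_2}, and the choice $\delta=\sigma^{-1}(\rho(\epsilon))$ are exactly the ingredients used there. Your closing remark about why the bound at $k=0$ alone suffices is precisely the point the paper emphasizes relative to \cite{Koegel13,Lazar15}.
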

\begin{proof}
	Since convergence to $\bar{\X}$ in $\mathcal{X}_{\hat{N}}$ is shown in Theorem \ref{thm_convergence}, we only need to verify stability.  Let $\epsilon>0$ be arbitrary and $\delta\coloneqq\sigma^{-1}(\rho(\epsilon))$. Consider an $x(0)\in\mathcal{X}_{\hat{N}}$ with $||x(0)||_{\bar{\X}}<\delta$ such that by \eqref{ub_rot_value_fcn}, $\bar{V}^*(x(0),0)\le\sigma(\delta)$. From \eqref{cost_decrease_rot_2} and Corollary \ref{cor_rot_cost}, $\{\bar{V}^*(x_\nu(x(0),k),k)\}_{k\in\I_{\ge 0}}$ is a non-increasing sequence, i.e.,
	\begin{equation*}
	\bar{V}^*(x_\nu(x(0),k),k)\le\bar{V}^*(x(0),0), \quad \forall k\in\I_{\ge 0}.
	\end{equation*}
	We also immediately have the lower bound $\bar{V}^*(x,k)\ge L(x,u)\ge\rho(||x||_{\bar{\X}})$ from Assumption \ref{ass_lb_term_cost_rot} and Corollary \ref{cor_rot_cost}. Using these two properties yields
	\begin{align*}
	||x_\nu(x(0),k)||_{\bar{\X}}&\le \rho^{-1}(\bar{V}^*(x_\nu(x(0),k),k)) \\
	&\le \rho^{-1}(\bar{V}^*(x(0),0))\le\rho^{-1}(\sigma(\delta))=\epsilon
	\end{align*}
	for all $k\in\I_{\ge 0}$, which proves stability of $\bar{\X}$ in $\mathcal{X}_{\hat{N}}$.
\end{proof}

In \cite{Koegel13} and \cite{Lazar15}, a $\mathcal{K}_\infty$ upper bound on the value function is assumed for all $k$, which is known as uniform weak controllability (\cite{Rawlings09}). On the other hand, since the predictive controller is initialized with a horizon $N(0)=\hat{N}$ at initial time $0$, Corollary \ref{cor_stability} shows that an upper bound for $k=0$ as in \eqref{ub_rot_value_fcn} is sufficient for (non-uniform) stability. Still it is difficult, if not impossible to verify \eqref{ub_rot_value_fcn} directly, since the rotated value function is in general not known beforehand. To this end, we will derive in the following sufficient conditions that guarantee \eqref{ub_rot_value_fcn} for the MPC with cyclic horizon. 

\begin{lemma}
	If Assumption \ref{ass_lb_term_cost_rot} holds, then the rotated terminal cost is upper bounded by a $\mathcal{K}_\infty$-function $\alpha$, i.e.,
	\begin{equation*}
	\bar{V}_f(x) \le \alpha(||x||_{\bar{\X}}), \quad \forall x\in\X_f.
	\end{equation*}
	\label{lemma_lb_ub_term_cost_rot}
\end{lemma}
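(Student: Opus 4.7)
The plan is to construct a $\mathcal{K}_\infty$ majorant of $\bar{V}_f$ via the distance to $\bar{\X}$ by first building a natural non-decreasing envelope and then smoothing it. Define
\begin{equation*}
g(r) \coloneqq \sup\{\bar{V}_f(x) : x\in\X_f,\ \|x\|_{\bar{\X}}\le r\}, \qquad r\ge 0.
\end{equation*}
By construction $g$ is non-decreasing. Assumption \ref{ass_lb_term_cost_rot} gives $\bar{V}_f\ge 0$ with equality exactly on $\bar{\X}$, and since $\bar{\X}$ is closed the condition $\|x\|_{\bar{\X}}=0$ characterizes $x\in\bar{\X}$; hence $g(0)=0$.

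The decisive step is to show $\lim_{r\downarrow 0} g(r)=0$, which is the continuity fact that allows $g$ to be majorized by a $\mathcal{K}_\infty$ function. For any sequence $x_n\in\X_f$ with $\|x_n\|_{\bar{\X}}\to 0$, pick $y_n\in\bar{\X}$ realizing the set distance (closedness of $\bar{\X}$ makes the infimum attainable). Then $\|x_n-y_n\|\to 0$, and continuity of $\bar{V}_f$ together with $\bar{V}_f(y_n)=0$ yields $\bar{V}_f(x_n)\to 0$, so $g$ is right-continuous at zero.

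Given a non-decreasing $g$ with $g(0)=\lim_{r\downarrow 0} g(r)=0$, the final step is a standard smoothing argument: replace $g$ by a continuous non-decreasing upper envelope $\bar{g}\ge g$ (for instance via a sliding-window average such as $\bar{g}(r)=\tfrac{1}{r}\int_r^{2r} g(s)\,ds$ for $r>0$ and $\bar{g}(0)=0$) and then set $\alpha(r)\coloneqq\bar{g}(r)+r$. The additive linear term enforces strict monotonicity and unboundedness, so $\alpha\in\mathcal{K}_\infty$ and $\bar{V}_f(x)\le g(\|x\|_{\bar{\X}})\le\alpha(\|x\|_{\bar{\X}})$ for all $x\in\X_f$.

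The main obstacle I anticipate is ensuring that $g(r)<\infty$ for every $r\ge 0$, since the paper only requires $\X_f$ to be closed. If $\X_f$ were unbounded and $\bar{V}_f$ grew without bound along some level set of the distance function, then $g$ would fail to be real-valued and the construction would break. The cleanest remedy is to invoke local compactness --- e.g., compactness of $\bar{\X}$ or of the sub-level sets $\{x\in\X_f:\|x\|_{\bar{\X}}\le r\}$ --- combined with continuity of $\bar{V}_f$ to secure $g(r)<\infty$. Once local finiteness is in hand, the right-continuity at zero established above together with the smoothing step completes the argument.
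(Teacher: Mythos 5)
Your construction is, in substance, a self-contained proof of exactly the result the paper invokes: the paper's proof is a one-line citation of \cite[Lemma 12]{Amrit11}, which is the standard statement that a continuous nonnegative function vanishing precisely on a closed set admits a $\mathcal{K}_\infty$ upper bound in the set distance, and your envelope $g(r)=\sup\{\bar{V}_f(x):x\in\X_f,\,\|x\|_{\bar{\X}}\le r\}$ followed by the averaging/linear-term regularization is the usual way that lemma is proved. So the route is essentially the same, and the technical steps you give (attainment of the set distance by closedness of $\bar{\X}$, $\bar{g}(r)=\int_1^2 g(rt)\,dt\ge g(r)$ monotone and continuous, $\alpha(r)=\bar{g}(r)+r$) are all sound. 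The one point worth emphasizing is that the obstacle you flag at the end is genuine and is not resolved by the paper either: \cite[Lemma 12]{Amrit11} is stated for a \emph{compact} domain, whereas here $\X_f$ is only assumed closed. Without compactness the conclusion can fail --- e.g.\ $\bar{\X}=\{(s,0)\}\subset\mathbb{R}^2$ and $\bar{V}_f(x_1,x_2)=x_2^2(1+x_1^2)$ is continuous, nonnegative, and zero exactly on $\bar{\X}$, yet along $x_n=(n,1/n)$ one has $\|x_n\|_{\bar{\X}}\to 0$ while $\bar{V}_f(x_n)\to 1$, so no $\mathcal{K}_\infty$ majorant in $\|x\|_{\bar{\X}}$ exists and both $g(r)=\infty$ for $r>0$ and the right-continuity step break down (pointwise continuity does not give $\bar{V}_f(x_n)-\bar{V}_f(y_n)\to 0$ unless the sequences stay in a compact set). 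Your proposed remedy --- compactness of the sublevel sets $\{x\in\X_f:\|x\|_{\bar{\X}}\le r\}$, which holds e.g.\ when $\X_f$ is compact, as in the paper's NCS application --- is exactly the missing hypothesis, and with it your argument is complete.
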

\begin{proof}
	\fontdimen2\font=3pt The upper bound follows directly from \cite[Lemma 12]{Amrit11}, Assumption \ref{ass_lb_term_cost_rot} and continuity of $\bar{V}_f$.
\end{proof}
\begin{remark}
	Lemma \ref{lemma_lb_ub_term_cost_rot} together with Assumption \ref{ass_lb_term_cost_rot} and \eqref{term_decrease_rot} implies that inside $\X_f$, $\bar{V}_f$ is a "finite-step" Lyapunov function as defined in \cite{Geiselhart14}.
\end{remark}
\begin{remark}
	To establish convergence, it would be sufficient that $\bar{\X}\supseteq\arg\min_{x}\bar{V}_f(x)$ (cf. Assumption \ref{ass_lb_term_cost_rot}). However, it is apparent that the upper bound on $\bar{V}_f$ cannot hold if this weaker condition is satisfied with a strict set inclusion.
\end{remark}

\begin{lemma}
	Suppose that Assumptions \ref{ass_dissi}-\ref{ass_lb_term_cost_rot} are fulfilled. Then, if $\hat{N} = JM, \; J\in\I_{\ge 1}$, it holds that
	\begin{equation*}
	\bar{V}^*(x,0)\le\alpha(||x||_{\bar{\X}}), \quad\forall x\in\X_f.
	\end{equation*}
	\label{lemma_ub_value_fcn_term_region}
\end{lemma}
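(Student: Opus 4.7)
The plan is to prove this by exhibiting an explicit feasible control sequence of length $\hat N$ obtained by iterating the terminal controllers $\kappa_0,\ldots,\kappa_{M-1}$ for exactly $J$ cycles, and then bounding the resulting rotated objective by the rotated terminal cost through a telescoping argument. Since $\hat N = JM$ and $x\in\X_f$, Assumption \ref{ass_term_inv} guarantees that after each block of $M$ steps the state returns to $\X_f$ while remaining in $\Z$ in between. Hence setting $x_j \coloneqq x_\kappa^{(j)}(x)$ for $j=0,\ldots,J$, where $x_{j+1} = x_\kappa(x_j,M)$ and $x_0=x$, the candidate input $\tilde u(\cdot)$ defined by concatenating the $J$ applications of $\kappa_0,\ldots,\kappa_{M-1}$ is feasible and satisfies $x_{\tilde u}(x,\hat N)=x_J\in\X_f$.

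Next I would estimate $\bar V(x,\tilde u(\cdot),\hat N)$. Splitting the running sum into $J$ blocks of length $M$ and applying the rotated terminal inequality \eqref{term_decrease_rot} from Lemma \ref{lemma_term_decrease_rot} to each block gives, for each $j\in\I_{[0,J-1]}$,
\begin{equation*}
\sum_{k=0}^{M-1} L\bigl(x_\kappa(x_j,k),\kappa_k(x_\kappa(x_j,k))\bigr) \le \bar V_f(x_j)-\bar V_f(x_{j+1}).
\end{equation*}
Summing these $J$ inequalities telescopes to $\bar V_f(x_0)-\bar V_f(x_J)=\bar V_f(x)-\bar V_f(x_J)$. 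Adding the terminal term $\bar V_f(x_J)$ then yields
\begin{equation*}
\bar V(x,\tilde u(\cdot),\hat N) \le \bar V_f(x).
\end{equation*}
By optimality, $\bar V^*(x,0)\le \bar V(x,\tilde u(\cdot),\hat N)\le \bar V_f(x)$.

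Finally, invoking Lemma \ref{lemma_lb_ub_term_cost_rot} delivers $\bar V_f(x)\le\alpha(\|x\|_{\bar\X})$ on $\X_f$, which combined with the previous chain of inequalities gives the claim. I do not foresee any real obstacle: feasibility is a direct consequence of Assumption \ref{ass_term_inv} iterated $J$ times (using that $\hat N$ is an exact multiple of $M$), and the bound is immediate from telescoping \eqref{term_decrease_rot} together with the $\mathcal K_\infty$ upper bound on $\bar V_f$ provided by Lemma \ref{lemma_lb_ub_term_cost_rot}. The only subtlety worth stating explicitly is that $\hat N=JM$ is essential, since it ensures that the terminal-controller cycle ends precisely in $\X_f$ so that the terminal constraint is met and no additional tail cost needs to be estimated.
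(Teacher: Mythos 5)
Your proof is correct. The paper proves the same key inequality $\bar V^*(x,0)\le\bar V_f(x)$ but organizes the argument differently: it introduces the family of value functions $\bar V^*_{jM}$ for horizons that are multiples of $M$, shows via the dynamic programming recursion and \eqref{term_decrease_rot} that $\bar V^*_M(x)\le\bar V_f(x)=\bar V^*_0(x)$ on $\X_f$, and then invokes a monotonicity result (analogous to \cite[Lemma 2.15]{Rawlings09}) to get $\bar V^*_{(j+1)M}\le\bar V^*_{jM}$ on the nested feasible sets $\X_f\subseteq\mathcal X_M\subseteq\ldots\subseteq\mathcal X_{JM}$. You instead exhibit a single explicit candidate input---$J$ concatenated cycles of the terminal controllers---and telescope \eqref{term_decrease_rot} over the $J$ blocks. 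Both hinge on the same two ingredients (M-step invariance of $\X_f$ from Assumption \ref{ass_term_inv} for feasibility, and the rotated terminal decrease for the cost bound), and both require $\hat N=JM$ for the same reason you state. Your route is more elementary and self-contained; the paper's route additionally establishes the value-function monotonicity and the inclusion chain $\X_f\subseteq\mathcal X_{\hat N}$ as explicit intermediate facts, the latter of which is reused in the proof of Lemma \ref{lemma_ub_value_fcn_feas_set} (though your feasibility argument also yields $\X_f\subseteq\mathcal X_{JM}$ implicitly). Either way the conclusion follows by combining $\bar V^*(x,0)\le\bar V_f(x)$ with Lemma \ref{lemma_lb_ub_term_cost_rot}.
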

\begin{proof}
	Consider the value functions
	\begin{equation*}
	\bar{V}^*_{jM}(x) \coloneqq \min_{u(\cdot)\in\U^{jM}(x)} \{\bar{V}(x,u(\cdot),jM)| x_u(x,jM)\in\X_f\}
	\end{equation*}
	for $j\in \I_{[0,J]}$, and their feasible sets $\mathcal{X}_{jM}$. From the dynamic programming recursion, we have for $j=1$
	\begin{equation}
	\begin{aligned}
	\bar{V}^*_M(x) = \min_{u(\cdot)\in\U^{M}(x)} &\{\sum_{i=0}^{M-1}L(x_u(x,i),u(i)) \label{prf_ub_value_fcn} \\
	+ &\bar{V}^*_0(x_u(x,M))| x_u(x,M)\in\mathcal{X}_0\}
	\end{aligned}
	\end{equation}
	with $\bar{V}^*_0(x)\hspace{-1pt}=\hspace{-1pt}\bar{V}_f(x)$ and $\mathcal{X}_0\hspace{-1pt}=\hspace{-1pt}\X_f$. By Assumption \ref{ass_term_decrease} and \eqref{term_decrease_rot}
	\begin{align*}
	\bar{V}^*_M(x) \hspace{-1pt}\stackrel{\eqref{prf_ub_value_fcn}}{=}\hspace{-1pt} \min \{\ldots\} \le& \hspace{-1pt} \sum_{i=0}^{M-1} \hspace{-2pt} L(x_\kappa(x,i),\kappa_i(x_\kappa(x,i))) \\
	+& \bar{V}_f(x_\kappa(x,M)) \le \bar{V}_f(x), \quad \forall x\in\X_f. \nonumber
	\end{align*}
	With ``$M$-step invariance'' of $\X_f$ from Assumption \ref{ass_term_inv}, we have $\X_f\subseteq\mathcal{X}_M\subseteq\ldots\subseteq\mathcal{X}_{JM}$. Then, analogous to the proof of \cite[Lemma 2.15]{Rawlings09}, we obtain
	\begin{equation*}
	\bar{V}^*_{(j+1)M}(x)\le\bar{V}^*_{jM}(x), \quad\forall x\in\mathcal{X}_{jM}, \; j\in\I_{[0,J-1]}.
	\end{equation*}
	Since $N(0)=\hat{N}=JM$, we finally obtain using Lemma \ref{lemma_lb_ub_term_cost_rot}
	\begin{equation*}
	\bar{V}^*(x,0) = \bar{V}^*_{JM}(x) \le \bar{V}_f(x) \le \alpha(||x||_{\bar{\X}}), \quad \forall x\in\X_f.
	\end{equation*}
\end{proof}
\begin{assumption}
	For $\X$, $\X_f$ and $\bar{\X}$ it holds that
	\begin{enumerate}[label=\alph*)]
		\item \label{ass_interior_a} $\text{int}(\X_f)\neq\emptyset$,
		\item \label{ass_interior_b} $\exists \hat{a}>0$ such that $\forall a\in[0,\hat{a}):$
		\begin{equation}
		(\bar{\X}\oplus \mathcal{B}_a)\cap \X_f = (\bar{\X}\oplus \mathcal{B}_a)\cap \X.
		\label{interior_cond}
		\end{equation}
	\end{enumerate}
	\label{ass_interior}
\end{assumption}
Assumption \ref{ass_interior}\ref{ass_interior_b} means that in an $a$-neighborhood around $\bar{\X}$, the boundaries of the terminal region and the constraint set are the same. It excludes $\bar{\X}=\X_f\subset\X$ and implies that if there are points in $\bar{\X}$ which lie on the boundary of $\X_f$, they must also lie on the boundary of $\X$. The notion of Assumption \ref{ass_interior} is more general than that of $\bar{\X}\subseteq\text{int}(\X_f)$, what is typically assumed to extend the upper bound on the value function to the entire feasible set, using a terminal region and cost (cf. \cite{Rawlings12}). A configuration for which $\bar{\X}\not\subseteq\text{int}(\X_f)$, but for which Assumption \ref{ass_interior} holds, is depicted in Figure \ref{interior_fig}.
\begin{figure}[h]
	\centering
	\includegraphics[scale=1.1]{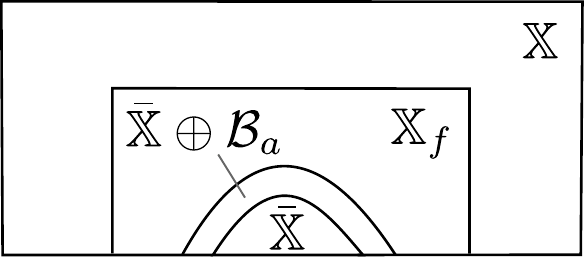}
	\caption{$\X$, $\X_f$ and $\bar{\X}$ fulfilling Assumption \ref{ass_interior}.}
	\label{interior_fig}
\end{figure}
\begin{lemma}
	Suppose the conditions of Lemma \ref{lemma_ub_value_fcn_term_region} and Assumption \ref{ass_interior} are fulfilled. Then, if the input constraint set $\U$ is compact, there exists a $\mathcal{K}_\infty$-function $\sigma$ such that
	\begin{equation*}
	\bar{V}^*(x,0)\le\sigma(||x||_{\bar{\X}}),\quad\forall x\in\mathcal{X}_{\hat{N}}.
	\end{equation*}
	\label{lemma_ub_value_fcn_feas_set}
\end{lemma}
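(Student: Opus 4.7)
I would split $\mathcal{X}_{\hat{N}}$ into a ``near region'' where $x$ lies close to $\bar{\X}$---on which Lemma~\ref{lemma_ub_value_fcn_term_region} already delivers the bound $\alpha(||x||_{\bar{\X}})$---and a ``far region'' on which that lemma no longer directly applies. On the far region I would establish continuity of $\bar{V}^*(\cdot,0)$, exploit compactness of $\U$ to get finite suprema of $\bar{V}^*$ on sublevel sets of $||\cdot||_{\bar{\X}}$, and finally assemble a single $\mathcal{K}_\infty$ envelope $\sigma$ dominating both bounds.

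\textbf{Near region.} Pick any $x \in \mathcal{X}_{\hat{N}}$ with $||x||_{\bar{\X}} < \hat{a}$. Since $\bar{\X}$ is closed the distance is attained, so $x \in \bar{\X} \oplus \mathcal{B}_a$ for any $a \in [\,||x||_{\bar{\X}}, \hat{a})$. Because $\mathcal{X}_{\hat{N}} \subseteq \X$, Assumption~\ref{ass_interior}\ref{ass_interior_b} immediately yields $x \in \X_f$, and Lemma~\ref{lemma_ub_value_fcn_term_region} then gives $\bar{V}^*(x,0) \le \alpha(||x||_{\bar{\X}})$.

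\textbf{Far region.} For $x \in \mathcal{X}_{\hat{N}}$ with $||x||_{\bar{\X}} \ge \hat{a}$ I would first argue continuity of $\bar{V}^*(\cdot,0)$ on $\mathcal{X}_{\hat{N}}$: compactness of $\U^{\hat{N}}$, continuity of $f,\ell,\lambda,V_f$ (hence of $\bar{V}$), and closedness of $\Z$ and $\X_f$ render the feasibility correspondence $x \mapsto \{u \in \U^{\hat{N}}(x) : x_u(x,\hat{N}) \in \X_f\}$ compact-valued and continuous, so Berge's maximum theorem delivers continuity of $\bar{V}^*(\cdot,0)$. I then set $\gamma(r) \coloneqq \sup\{\bar{V}^*(x,0) : x \in \mathcal{X}_{\hat{N}},\ ||x||_{\bar{\X}} \le r\}$, which is nondecreasing and finite for each $r$ because the set in the supremum is closed and bounded, and continuity of $\bar{V}^*(\cdot,0)$ forces the sup to be attained on this compactum. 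A standard majorization then produces $\sigma \in \mathcal{K}_\infty$ with $\sigma(r) \ge \max\{\alpha(r),\gamma(r)\}$ for all $r \ge 0$, which gives the claimed bound on all of $\mathcal{X}_{\hat{N}}$.

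\textbf{Main obstacle.} The delicate points will be (i) continuity of $\bar{V}^*(\cdot,0)$: upper hemicontinuity of the feasibility correspondence is straightforward from $\U$ compact and from closedness of $\Z, \X_f$, whereas lower hemicontinuity typically requires a constraint-qualification-style argument for the coupled constraints $(x_u(x,i),u(i)) \in \Z$ and $x_u(x,\hat{N}) \in \X_f$; and (ii) finiteness of $\gamma(r)$, which needs the sublevel sets of $||\cdot||_{\bar{\X}}$ within $\mathcal{X}_{\hat{N}}$ to be compact, i.e., $\bar{\X}$ bounded---implicit in the customary MPC framework but worth flagging. Once these are in hand, the majorization step that manufactures $\sigma \in \mathcal{K}_\infty$ from the nondecreasing, finite, $0$-at-$0$ function $\max\{\alpha,\gamma\}$ is routine.
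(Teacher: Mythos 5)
Your decomposition into a near region (where Assumption~\ref{ass_interior}\ref{ass_interior_b} forces points of $\mathcal{X}_{\hat{N}}$ close to $\bar{\X}$ into $\X_f$, so that Lemma~\ref{lemma_ub_value_fcn_term_region} supplies the bound $\alpha(\|x\|_{\bar{\X}})$) and a far region (bounded away from $\bar{\X}$, handled by a majorization) is exactly the structure of the paper's proof. The near-region step is correct as you state it.

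The genuine gap is in the far region: you route the argument through \emph{continuity} of $\bar{V}^*(\cdot,0)$ via Berge's maximum theorem, and you correctly flag that this requires lower hemicontinuity of the feasibility correspondence $x\mapsto\{u(\cdot)\in\U^{\hat{N}}(x): x_u(x,\hat{N})\in\X_f\}$ --- but that property does \emph{not} follow from the paper's assumptions and is false in general. With mixed state--input constraints $(x_u(x,i),u(i))\in\Z$ and a terminal constraint $x_u(x,\hat{N})\in\X_f$, the set of feasible input sequences can collapse discontinuously as $x$ varies (this is precisely the well-known source of discontinuity of constrained MPC value functions), and no constraint qualification is assumed here that would rule this out. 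So as written, the continuity claim is a missing idea that cannot be filled in under the stated hypotheses. The fix --- and what the paper actually does --- is to observe that continuity is not needed: compactness of $\U$ together with continuity of $f$, $\ell$, $\lambda$, $V_f$ yields \emph{local upper boundedness} of $\bar{V}^*(\cdot,0)$ on $\mathcal{X}_{\hat{N}}$ (cite of \cite{Rawlings12}, Proposition~1), and local upper boundedness plus the near-region bound $\alpha(a)$ on $S_1=(\bar{\X}\oplus\mathcal{B}_a)\cap\mathcal{X}_{\hat{N}}\subseteq\X_f$ is already enough to run the nested-sets construction of $\sigma$ (\cite{Rawlings12}, Proposition~11). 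Relatedly, your appeal to continuity to have the supremum $\gamma(r)$ ``attained'' is a red herring: only finiteness of $\gamma(r)$ is needed, and that follows from covering the (assumed compact) sublevel set by finitely many neighborhoods on each of which the locally bounded $\bar{V}^*(\cdot,0)$ is bounded. Your second caveat --- that finiteness needs compactness of the sets $\{x\in\mathcal{X}_{\hat{N}}:\|x\|_{\bar{\X}}\le r\}$, hence boundedness of $\bar{\X}$ --- is a fair observation that applies equally to the paper's deferred construction; it is not specific to your route.
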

\begin{proof}
	With compactness of $\U$, $\bar{V}^*(x,0)$ is locally upper bounded on $\mathcal{X}_{\hat{N}}$ by \cite[Proposition 1]{Rawlings12}. In the following, we construct $\sigma$.
	
	Note that Assumption \ref{ass_interior}\ref{ass_interior_b} implies that \eqref{interior_cond} also holds with $\mathcal{X}_{\hat{N}}$ instead of $\X$, since $\X_f\subseteq\mathcal{X}_{JM}=\mathcal{X}_{\hat{N}}\subseteq\X$. Given Assumption \ref{ass_interior}, there exists an $a>0$ such that $S_1 \coloneqq (\bar{\X}\oplus \mathcal{B}_a)\cap \mathcal{X}_{\hat{N}}$ is a subset of $\X_f$, i.e., $S_1\subseteq\X_f$. Then, by Assumption \ref{ass_interior}\ref{ass_interior_b}, $\mathcal{X}_{\hat{N}}\setminus S_1$ contains only points which fulfill $||x||_{\bar{\X}}>a>0$. This means that points in $\mathcal{X}_{\hat{N}}\setminus S_1$ cannot lie arbitrarily close to $\bar{\X}$. Instead, $\alpha(a)$ is an upper bound on $\bar{V}^*(x,0)$ for all $x\in (\bar{\X}\oplus \mathcal{B}_a)\cap \mathcal{X}_{\hat{N}}\subseteq\X_f$.
	
	The definition of $S_i$, $i\in\I_{\ge 2}$ and the remainder of the construction of $\sigma$ is equivalent to the proof of \cite[Proposition 11]{Rawlings12}.
\end{proof}

Next, we state the main result of this section, which is a direct combination of Corollary \ref{cor_stability} and Lemmas \ref{lemma_lb_ub_term_cost_rot}-\ref{lemma_ub_value_fcn_feas_set}.

\begin{theorem}
	Suppose that Assumptions \ref{ass_dissi}-\ref{ass_interior} hold, that $\hat{N}=JM, \; J\in\I_{\ge 1}$ and that the input constraint set $\U$ is compact. Then, the control invariant set $\bar{\X}$ is asymptotically stable with a region of attraction $\mathcal{X}_{\hat{N}}$.
	\label{thm_stability}
\end{theorem}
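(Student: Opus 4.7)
The plan is to invoke Corollary \ref{cor_stability}, which asserts asymptotic stability of $\bar{\X}$ with region of attraction $\mathcal{X}_{\hat{N}}$ provided the hypotheses of Theorem \ref{thm_convergence} hold (giving convergence) \emph{and} a $\mathcal{K}_\infty$ upper bound on the rotated value function at the initial time step, namely $\bar{V}^*(x(0),0)\le\sigma(\|x(0)\|_{\bar{\X}})$ for all $x(0)\in\mathcal{X}_{\hat{N}}$, is available. Since Assumptions \ref{ass_dissi}--\ref{ass_lb_term_cost_rot} are in force by hypothesis, the convergence part of Corollary \ref{cor_stability} is immediate from Theorem \ref{thm_convergence}. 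All that remains is to produce the $\mathcal{K}_\infty$ upper bound $\sigma$.

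To produce $\sigma$, I would chain the three lemmas in the order they are proved. First, Lemma \ref{lemma_lb_ub_term_cost_rot} (which only needs Assumption \ref{ass_lb_term_cost_rot}) supplies a $\mathcal{K}_\infty$-function $\alpha$ with $\bar{V}_f(x)\le\alpha(\|x\|_{\bar{\X}})$ on $\X_f$. Second, because $\hat{N}=JM$ and Assumptions \ref{ass_dissi}--\ref{ass_lb_term_cost_rot} hold, Lemma \ref{lemma_ub_value_fcn_term_region} applies and yields $\bar{V}^*(x,0)\le\alpha(\|x\|_{\bar{\X}})$ for every $x\in\X_f$. Third, since $\U$ is compact and Assumption \ref{ass_interior} holds, Lemma \ref{lemma_ub_value_fcn_feas_set} extends this pointwise bound on $\X_f$ to a $\mathcal{K}_\infty$ upper bound $\sigma$ valid on the entire feasible set $\mathcal{X}_{\hat{N}}$.

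With both prerequisites of Corollary \ref{cor_stability} verified, asymptotic stability of $\bar{\X}$ with region of attraction $\mathcal{X}_{\hat{N}}$ follows directly. The statement of the theorem is then obtained by simply collecting these implications.

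There is no real obstacle to this proof since the heavy lifting is done in the lemmas; the only thing one must be careful about is checking that \emph{every} hypothesis of each invoked lemma (and of Corollary \ref{cor_stability}) is among those assumed in the theorem. In particular, one should note that Assumption \ref{ass_interior} and compactness of $\U$ enter only through Lemma \ref{lemma_ub_value_fcn_feas_set}, while the divisibility $\hat{N}=JM$ is precisely what enables Lemma \ref{lemma_ub_value_fcn_term_region} (so that the terminal controllers $\kappa_0,\ldots,\kappa_{M-1}$ can be concatenated $J$ times to construct a feasible candidate input achieving the terminal bound at time $0$). Once these compatibility checks are done, no further calculation is required.
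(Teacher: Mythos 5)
Your proposal is correct and follows exactly the route the paper itself takes: the paper states that Theorem \ref{thm_stability} is ``a direct combination of Corollary \ref{cor_stability} and Lemmas \ref{lemma_lb_ub_term_cost_rot}--\ref{lemma_ub_value_fcn_feas_set},'' which is precisely the chain you assemble, with all hypotheses correctly matched to where they are needed.
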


\section{Application: Control over Network} \label{sect_NCS}

As mentioned in the introduction, the development of the preceding analysis was motivated by a problem arising in the context of NCS. We briefly discuss how to apply the presented theory to this special case, while a more detailed exposition in the context of NCS is discussed in \cite{Wildhagen19b}. The considered setup is to control a discrete-time, linear time-invariant plant
\begin{equation}
x_p(k+1) = Ax_p(k) + Bu_p(k)
\label{lin_dyn}
\end{equation}
subject to the constraints $x_p(k)\in\X_p\subseteq\mathbb{R}^{n_p}$ and $u_p(k)\in\U_p\subseteq\mathbb{R}^{m_p}$ over a known, deterministic network. Transmissions over the network are required to fulfill a so-called token-bucket specification, which represents the communication capacities of the network. This specification was first introduced in the context of networked control in \cite{Linsenmayer18}; a more general characterization can be found, e.g., in \cite{Tanenbaum11}. The level of the bucket evolves according to the saturating dynamics
\begin{equation}
\beta(k+1) = \min\{\beta(k)+g-\gamma(k)c,b\},
\label{token_bucket}
\end{equation}
where $\beta(k)$ is the current bucket level and $\gamma(k)\in\{0,1\}$ is the decision on whether to transmit over the network or not. The parameters are the token generation rate $g\in\I_{\ge 1}$, the cost per transmission $c\in\I_{\ge g}$ and the bucket size $b\in\I_{\ge c}$. A transmission sequence $\gamma(\cdot)$ that fulfills the token-bucket specification may never drain it, i.e., $\beta(k)\ge 0$ for all $k\in\I_{\ge 0}$ under \eqref{token_bucket}. It is typically assumed that $c>g$, i.e., it is not possible to transmit at every time instance. The plant \eqref{lin_dyn} receives a new control input $u_c$ only if one is sent over the network, otherwise the last applied input is held. Associated with the plant is the quadratic cost on the state and applied input $x_p^\top Qx_p + u_p^\top Ru_p$ ($Q,R>0$), while the cost is independent of the bucket level.

The applied input from the last time step is saved in
\begin{equation}
u_s(k+1) = \gamma(k)u_c(k) + (1-\gamma(k))u_s(k) \eqqcolon u_p(k).
\label{ZOH}
\end{equation}
Denoting the overall state $x\coloneqq[x_p, u_s, \beta]$ and the control $u\coloneqq[u_c, \gamma]$, the economic cost of the overall system is
\begin{equation}
\ell(x,u) = x_p^\top Qx_p + u_c^\top \gamma Ru_c + u_s^\top(1-\gamma)Ru_s, \label{eco_cost_NCS}
\end{equation}
the state constraint set is $\X=\X_p\times\U_p\times\I_{[0,b]}$, the input constraint set is $\U=\U_p\times\{0,1\}$, and $f$ is composed of the right hand side of \eqref{lin_dyn}, \eqref{token_bucket} and \eqref{ZOH}. Due to its independence of the bucket level $\beta$, and the decision variable $\gamma$ appearing as a factor in its terms, the economic cost \eqref{eco_cost_NCS} is not positive definite. The lowest possible asymptotic average cost $\ell^*_{av}=0$ is attained in the control invariant set $\bar{\X}=\{0\}\times\{0\}\times\I_{[0,b]}$. Using the storage function $\lambda(x)=u_s^\top Su_s,\;R>S>0$ yields a rotated stage cost $L(x,u) \ge x_p^\top Q x_p + u_s^\top S u_s = \rho(||x||_{\bar{\X}})$, hence the system is dissipative with respect to $\bar{\X}$ and supply rate $\ell(x,u)-\ell^*_{av}$.

Due to the token-bucket network, it can only be guaranteed that a new control input can be sent to the plant every $M=\lceil \frac{c}{g}\rceil$ time instances. Hence, a compact terminal region $\X_f$, other that $\bar{\X}$, cannot be made control invariant for general unstable plants \eqref{lin_dyn}. It can only be guaranteed that the state returns to $\X_f$ after $M$ steps, which ensures recursive feasibility for an MPC with cyclic horizon and $\hat{N}\ge M$. In particular, assuming that the pair $(A^M,\sum_{i=0}^{M-1}A^i B)$ is controllable, there exist $V_f(x)=x_p^\top Px_p, \; P>0$ and $K$ such that if the terminal controller is chosen as
\begin{equation*}
\kappa_0(x) = \begin{cases} [Kx_p, 1] & \beta\ge c-g \\
[0,0] & \text{otherwise}				\end{cases}
\end{equation*}
and $\kappa_i(x)=[0, 0], \; i\in\I_{[1,M-1]}$, then there exists an $a\ge0$ such that with the terminal region $\X_f = \{0\}\times\{0\}\times\I_{[0,c-g)}\cup\{x_p|x_p^\top P x_p \le a\}\times\U_p\times\I_{[c-g,b]}$, Assumptions \ref{ass_term_inv} and \ref{ass_term_decrease} are fulfilled. The rotated terminal cost $\bar{V}_f(x)=x_p^\top Px_p + u_s^\top Su_s$ attains its minimum exactly on $\bar{\X}$. Hence, all conditions of Theorem \ref{thm_convergence} are fulfilled, and provided initial feasibility, the state of the closed loop system converges to $\bar{\X}$ as $k\rightarrow\infty$.

Note that Assumption \ref{ass_interior} does not hold in this example. However, consider the set $\mathcal{X}^*_{\hat{N}}\coloneqq \{x\in\mathcal{X}_{\hat{N}}| [x_p,u_s] = [0,0] \text{ or } \beta \in \I_{[c-g,b]}\}$, for which Assumption \ref{ass_interior} holds (if $\X$ is replaced by $\mathcal{X}^*_{\hat{N}}$). Then, if $\hat{N}=JM$ and $\U_p$ is compact, an upper bound of $\bar{V}^*(x,0)$ in $\mathcal{X}^*_{\hat{N}}$ is ensured by Lemmas \ref{lemma_lb_ub_term_cost_rot}, \ref{lemma_ub_value_fcn_term_region} and \ref{lemma_ub_value_fcn_feas_set}. Therefore, asymptotic stability is guaranteed if the initial state lies in $\mathcal{X}^*_{\hat{N}}$.

For numerical simulation, we consider the linearized and discretized batch reactor taken from \cite{Heemels13}, which is a well-known benchmark example in NCS. The matrices $A$ and $B$ can be found therein. We consider the box constraints $\X_p=[-1.2,1.2]^{4}$ and $\U_p=[-2,2]^{2}$, and the initial plant state $x_p(0)=[1,0,1,0]$. The bucket with $g=1$, $c=3$ (such that $M=3$) and $b=10$ is initialized at $\beta(0)=2$, and we set $u_s(0)=0$. The cost matrices are chosen as $Q=10I$ and $R=I$. Figure \ref{plant_state} shows that the set $\bar{\X}$ is asymptotically stable for the overall system.
\begin{figure}[h]
	\centering
%
%

\pgfplotsset{every tick label/.append style={font=\small}}

\begin{tikzpicture}

\begin{axis}[%
width=\columnwidth,
height=5.5cm,
at={(1.609in,0.442in)},
xmin=0,
xmax=14,
xlabel={Time $k$},
ymin=-0.7,
ymax=1.3,
ylabel={Plant state $x_s$},
axis background/.style={fill=white},
legend style={legend cell align=left, align=left, draw=white!15!black, font=\small}
]
\addplot[const plot, color=istgreen] table[row sep=crcr] {%
0	0.817766041686343\\
1	1.19999993008871\\
2	1.11685911072027\\
3	0.714513973501339\\
4	0.511814223610472\\
5	0.245779776503706\\
6	0.114311908525181\\
7	0.0498215858135726\\
8	0.0260503515981224\\
9	0.00953483350681749\\
10	0.00018514908152721\\
11	-0.0015470790730304\\
12	-0.000725743100427778\\
13	0.000122880956697032\\
14	0.000245017222447768\\
};
\addlegendentry{$x_{p1}$}

\addplot[const plot, color=istorange] table[row sep=crcr] {%
0	0.0239957261687133\\
1	-0.0104388638216448\\
2	0.0218201976814627\\
3	0.0374891963741133\\
4	0.0313796916487677\\
5	0.0770302468436573\\
6	0.115311526902875\\
7	0.041004518283759\\
8	0.00514644652846105\\
9	-0.0131322844747161\\
10	-0.00497789619174662\\
11	-0.000678450043002867\\
12	0.00161302152423128\\
13	0.000526597919374484\\
14	-4.88237211740977e-05\\
};
\addlegendentry{$x_{p2}$}

\addplot[const plot, color=istblue] table[row sep=crcr] {%
0	0.924820914310183\\
1	0.717544661560291\\
2	0.116293145185841\\
3	-0.112485104035186\\
4	-0.580123540030378\\
5	-0.270591356370287\\
6	-0.134123348835291\\
7	-0.0263410149893854\\
8	-0.00489113821219196\\
9	-0.0136028479873186\\
10	-0.00448020945716724\\
11	-0.00113490152493353\\
12	0.00153100862085881\\
13	0.000692326965584743\\
14	0.00024823384756101\\
};
\addlegendentry{$x_{p3}$}

\addplot[const plot, color=istred] table[row sep=crcr] {%
0	0.0904230982682515\\
1	0.155270544751583\\
2	0.108292880092153\\
3	-0.0519063879650103\\
4	-0.0703653055040311\\
5	-0.0655519348670312\\
6	-0.00570828030095057\\
7	0.0206040275367803\\
8	0.0206836218775896\\
9	0.00713901250813132\\
10	-0.000294929850129803\\
11	-0.00157571121882485\\
12	-0.000300885671168094\\
13	0.000415827227396515\\
14	0.000409321781016464\\
};
\addlegendentry{$x_{p4}$}

\end{axis}
\end{tikzpicture}%
\hspace{5pt}
\begin{tikzpicture}

\begin{axis}[%
width=\columnwidth,
height=5.5cm,
at={(1.609in,0.442in)},
xmin=0,
xmax=14,
xlabel={Time $k$},
ymin=-0.5,
ymax=2.5,
ylabel={Control input $u_c$},
axis background/.style={fill=white},
legend style={legend cell align=left, align=left, draw=white!15!black, font=\small}
]
\addplot[const plot, color=istgreen] table[row sep=crcr] {%
	0	-0.000104842289406691\\
	1	0.114314419333174\\
	2	3.43691908801447e-05\\
	3	-6.70599091835802e-21\\
	4	0.140528622737944\\
	5	-6.70599091835802e-21\\
	6	-0.025134092531049\\
	7	-6.34741588783969e-18\\
	8	-6.34741588783969e-18\\
	9	0.00312827413835817\\
	10	-3.20135608894077e-18\\
	11	-3.20135608894077e-18\\
	12	-0.000548820933140323\\
	13	6.38869754737727e-18\\
	14	6.38869754737727e-18\\
};
\addlegendentry{$u_{c1}$}

\addplot[const plot, color=istorange] table[row sep=crcr] {%
	0	3.43580937611847e-05\\
	1	2.00145986383691\\
	2	3.43695724256873e-05\\
	3	-6.70599091835802e-21\\
	4	0.295146391652981\\
	5	-6.70599091835802e-21\\
	6	0.0526787353224282\\
	7	-6.34741588783969e-18\\
	8	-6.34741588783969e-18\\
	9	-0.00656740934645204\\
	10	-3.20135608894077e-18\\
	11	-3.20135608894077e-18\\
	12	0.000962738876684808\\
	13	6.38869754737727e-18\\
	14	6.38869754737727e-18\\
};
\addlegendentry{$u_{c2}$}

\end{axis}
\end{tikzpicture}%
	\caption{The evolution of controlled plant states and the control input over time with $\hat{N}=3$. If the control input is zero in the figure, no new input is sent over the network and the previously applied input is held.}
	\label{plant_state}
\end{figure}
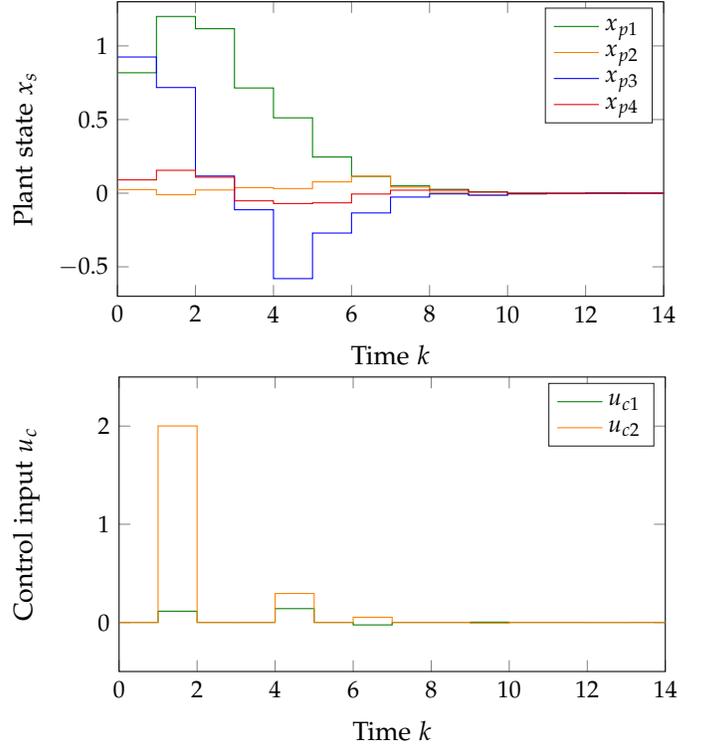
\section*{Acknowlegdements}
The first author would like to thank Johannes K\"{o}hler for many helpful comments and fruitful discussions.

\small
\bibliographystyle{plain}
\bibliography{bib_NOLCOS19}             
                                                
\end{document}